\def\<#1>{\langle#1\rangle}
\def\Li{\operatorname{Li}}
\let\ideal\unlhd
\def\Q{{\mathbb Q}}
\def\N{{\mathbb N}}
\def\ann{\operatorname{ann}}
\def\lcm{\operatorname{lcm}}
\def\bd{{\boldsymbol{\partial}}}
\let\bpartial\bd
\def\bm{{\boldsymbol{m}}}
\def\bx{{\boldsymbol{x}}}
\def\bt{{\boldsymbol{t}}}
\def\bu{{\boldsymbol{u}}}
\def\bdx{{\boldsymbol{\partial_x}}}
\def\bdt{{\boldsymbol{\partial_t}}}
\def\balpha{\boldsymbol{\alpha}}
\def\bbeta{\boldsymbol{\beta}}
\def\bgamma{\boldsymbol{\gamma}}
\def\bS{{\boldsymbol{S}}}   
\def\expo{\operatorname{E}}
\def\Ore{{\mathbb O}}
\newtheorem{theorem}{Theorem}
\newtheorem{proposition}{Proposition}
\newtheorem{lemma}{Lemma}
\theoremstyle{definition} 
\newtheorem{definition}{Definition}
\newtheorem{example}{Example}
\def\parag#1{\subsubsection*{#1}}
\begin{document}

 \title{A Non-Holonomic Systems Approach to Special Function Identities}

\author{Fr\'ed\'eric Chyzak}
\email{frederic.chyzak@inria.fr}
\author{Manuel Kauers}
\email{mkauers@risc.jku.at}
\author{Bruno Salvy}
\email{Bruno.Salvy@inria.fr}
\thanks{F.\,C.\ and B.\,S.'s work was supported in part by the Microsoft Research\,--\,INRIA Joint Center. This work was carried out while M.\,K.\ was member of the Algorithms Project-Team at INRIA Paris-Rocquencourt. At RISC, M.\,K.\ is supported by FWF grant P19462-N18.}

\begin{abstract}
  We extend Zeilberger's approach to special function identities to cases that are not
  holonomic. The method of creative telescoping is thus applied to definite sums or
  integrals involving Stirling or Bernoulli numbers, incomplete Gamma function or
  polylogarithms, which are not covered by the holonomic framework.  The basic idea is to take
  into account the dimension of appropriate ideals in Ore algebras. This unifies several
  earlier extensions and provides algorithms for summation and
  integration in classes that had not been accessible to computer algebra before.
\end{abstract}
\maketitle

\section{Introduction}
In a classical article entitled ``A holonomic systems approach to special functions identities''~\cite{Zeilberger1990}, Doron~Zeilberger has shown that the theory of holonomic D-modules leads to algorithms for proving identities in large classes of special functions. 
In this setting, 
a function $f(x_1,\dots,x_n)$ is represented by a system of linear differential equations with polynomial coefficients that annihilate it. The function is ``holonomic''  when it possesses two important properties:
\emph{(i)\/}~besides its defining system, $f$ can be specified by a \emph{finite\/} number of initial conditions;
\emph{(ii)\/}~the number of linearly independent functions among all $x_1^{m_1}\dotsm x_n^{m_n}\frac{\partial^{k_1}}{\partial x_1^{k_1}}\dotsm\frac{\partial^{k_n}}{\partial x_n^{k_n}}(f)$ with $m_1+\dots+m_n+k_1+\dots+k_n\le N$ grows like~$O(N^{n})$.
The first property has the consequence that many operations on holonomic functions reduce to linear algebra. It leads to closure properties under sum, product, and specialization. The second one is related to the notion of holonomic D-modules and opens the way to algorithms for definite integration and summation. For this, Zeilberger developed a general method called \emph{creative telescoping}, for which he gave two algorithms: one for the general holonomic case and a faster one in the hypergeometric case.

Originally, the notion of a holonomic system is only defined for differential systems, but there are several ways
of extending it to systems of difference (or $q$-difference) equations. Among those, we prefer the use of Ore algebras~\cite{ChyzakSalvy1998}. There, the first property above corresponds to zero-di\-men\-sion\-al ideals, which are called $\partial$-finite.
The same closure properties (sum, product, specialization) hold and can be performed by Gröbner bases computation. Chyzak~\cite{Chyzak2000} extended Zeilberger's fast hypergeometric creative telescoping to all $\partial$-finite ideals,  termination being guaranteed inside Zeilberger's holonomic class.

Another direction of extension concerns functions or sequences that
cannot be defined by a holonomic system or even a $\partial$-finite ideal. 
Majewicz~\cite{Majewicz1996} has given an algorithm 
that is able to produce Abel's summation identity 
\[
 \sum_{k=0}^n\binom nk i(k+i)^{k-1}(n-k+j)^{n-k}=(n+i+j)^n
\]
automatically and to find similar new identities.
Kauers~\cite{Kauers2007} has given a summation algorithm applicable to sums involving Stirling numbers
and similar sequences defined by triangular recurrence equations. This algorithm finds, for instance, 
\[
 \sum_{k=0}^n(-1)^{m-k}k!\binom{n-k}{m-k}S_2(n+1,k+1)=E_1(n,m),
\]
where $S_2$ and $E_1$ refer to the Stirling numbers of second kind and the Eulerian numbers of first kind, 
respectively.
A summation algorithm of Chen and Sun~\cite{ChenSun2008} is able to discover certain summation identities involving 
Bernoulli numbers~$B_n$ or similar quantities, for example 
\[
 \sum_{k=0}^m\binom mk B_{n+k}=(-1)^{m+n}\sum_{k=0}^n\binom nk B_{m+k}.
\]
None of the quantities covered by these algorithms admits a definition via 
a $\partial$-finite ideal, but
all three algorithms are based on principles that resemble those employed for holonomic systems and $\partial$-finite ideals.
In each case, it turns out that the differential/difference equations defining the integrand/summand are of a 
form that permits to prove the existence of at least one non-trivial differential/difference equation for the 
integral/sum by a counting argument. 

In this article, we give algorithms dealing with ideals of Ore algebras that are \emph{not\/} $\partial$-finite. They generalize the algorithms known for the $\partial$-finite case and cover the extensions to non-holonomic functions discussed so far. Holonomy being lost, it is not always the case that creative telescoping can succeed. However, holonomy is only a sufficient condition. We show that by considering more generally the \emph{dimension\/} of the ideals and another quantity that we call \emph{polynomial growth\/}, it is possible to predict termination of a generalization of Chyzak's generalization of Zeilberger's fast algorithm. As special cases, we recover
holonomic systems (dimension~0, polynomial growth~1), but also the special purpose algorithms mentioned above for Abel-type sums (dimension~2, polynomial growth~1),
Stirling-number identities  and summation identities about Bernoulli numbers (dimension~1, polynomial growth~1). 
In addition, we get for free a summation/integration algorithm that can deal with non-holonomic special functions 
such as the incomplete Gamma function~$\Gamma(n,z)$, the Hurwitz zeta function~$\zeta(n,z)$, polylogarithms~$\Li_n(x)$,~\@\dots\
Examples are given in Section~\ref{sec:examples}.

\section{Ore Algebras and their Ideals}
Our motivation for using Ore algebras is the convenient polynomial
representation of linear operators that they offer. Classical notions of commutative polynomial rings generalize to this setting. In this section, we recall without proof the basic definitions and facts we use (see~\cite{ChyzakSalvy1998,Kandri-RodyWeispfenning1990} 
and their references for proofs, details, and history).

\begin{table}
\begin{center}\small
\begin{tabular}{lll}
\hline\hline
Operator               &$\partial\cdot f(x)$\hspace{15mm}
&$x\cdot f(x)$\hspace{15mm}\\
\hline
Differentiation   $\frac d{dx}$&$f'(x)$      &$xf(x)$   \\
Shift   $S$                    &$f(x+1)$      &$xf(x)$   \\
Difference   $\Delta$          &$f(x+1)-f(x)$  &$xf(x)$  \\
$q$-Dilation    $Q$            &$f(qx)$       &$xf(x)$   \\
Continuous $q$-difference      &$f(qx)-f(x)$   &$xf(x)$  \\
$q$-Differentiation $\delta^{(q)}$     &$\frac{f(qx)-f(x)}{(q-1)x}$  &$xf(x)$    \\
$q$-Shift     $S^{(q)}$            &$f(x+1)$               &$q^xf(x)$       \\
Discrete $q$-difference  $\Delta^{(q)}$ &$f(x+1)-f(x)$     &$q^xf(x)$       \\
Eulerian operator   $\Theta$      &$xf'(x)$  &$xf(x)$     \\
Mahlerian operator   $M$    &$f(x^b)$       &$xf(x)$      \\
Divided differences      &$\frac{f(x)-f(a)}{x-a}$&$xf(x)$ \\
\hline\hline
\end{tabular}
\end{center}
\caption{Some common Ore operators}
\label{Ore-op-table}
\end{table}

\subsection{Ore Algebras}
\parag{$\sigma$-derivations} 
Let $A$ be a commutative algebra over a field~$k$, and $\sigma$ an injective algebra
endomorphism of~$A$ that induces the identity on~$k$.  A $k$-linear endomorphism~$\delta$ of~$A$ is called a
\emph{$\sigma$-derivation\/} if it satisfies the skew Leibniz rule
$\delta(uv)=\sigma(u)\delta(v)+\delta(u)v$ for all $u$ and~$v$ in~$A$.

\parag{Skew-polynomial rings}
The associative ring generated over~$A$ by a new
indeterminate~$\partial$ and the relations $\partial
u=\sigma(u)\partial+\delta(u)$ for all $u\in A$ is called a
\emph{(left) skew-polynomial ring}.  It is denoted
by $A[\partial;\sigma,\delta]$. It does not have zero-divisors. It is called an \emph{extension} of the skew-polynomial ring $B[\partial;\sigma',\delta']$ when~$B\subset A$, $\sigma'=\sigma|_B$ and $\delta'=\delta|_B$. In examples, we allow ourselves to use different symbols in place
of~$\partial$ for increased readability. In particular, we use $S_n,S_k$, etc.\ for denoting both the indeterminate $\partial$
and the corresponding $\sigma$ in shift algebras. 

\parag{Ore operators}
A skew polynomial~$L\in A[\partial;\sigma,\delta]$ acts on  left $A[\partial;\sigma,\delta]$-modules---in most of
our applications, these are modules of functions, power series, or sequences. Solving then means finding an element~$h$ in such a module such that~$L\cdot h=0$. In this perspective, skew polynomials are called \emph{Ore operators}. 

For any~$\lambda\in A$, the action $\partial:a\mapsto \partial\cdot a=\lambda\sigma(a)+\delta(a)$ turns the algebra~$A$ itself into a left $A[\partial;\sigma,\delta]$-module. For any~$u,v$ in~$A$, one has the product rule $\partial\cdot uv=\sigma(u)\partial\cdot v+\delta(u)v$.

Tables \ref{Ore-op-table} and~\ref{Ore-alg-table} illustrate some
common types of Ore operators when~$A=k[x]$, together with the values of
$\sigma$~and~$\delta$ that define the associated skew-polynomial
ring.

In all these examples, $\sigma$~and~$\delta$ can be written as $A$-linear combinations of~$\partial$ and the identity. We call \emph{linear\/} a skew-polynomial ring with this property.

\begin{table}
\begin{center}\small
\begin{tabular}{l@{\hskip2mm}c@{\hskip3mm}c@{\hskip3mm}c@{}c}
\hline\hline
Operator                &$\sigma$ &$\delta$
              &$\partial x$\\
\hline
Differentiation         &$\operatorname{Id}$&$\frac d{dx}$
        &$x\partial+1$\\
Shift $S$                  & $S$       &0
            &$(x+1)\partial$\\
Difference              &$S$       &$\Delta$
          &$(x+1)\partial+1$\\
$q$-Dilation            &$Q$        &0
             &$qx\partial$\\
Cont. $q$-difference    &$Q$        &$Q-\operatorname{Id}$
            &$qx\partial+(q-1)x$\\
$q$-Differentiation     &$Q$        &\kern-1ex$\tfrac1{(q-1)x}(Q-\operatorname{Id})$\kern-1em\null
              &$qx\partial+1$\\
$q$-Shift               &$Q$        &0
                &$qx\partial$\\
Discr. $q$-difference   &$Q$        &$Q-\operatorname{Id}$
           &$qx\partial+(q-1)x$\\
Eulerian operator       &$\operatorname{Id}$      &$x\frac d{dx}$
         &$x\partial+x$\\
Mahlerian operator      &$M$       &0
          &$x^b\partial$\\
Divided differences     &\kern-1ex$f\mapsto f(a)$\kern-1ex\null         &\kern-1ex$x\mapsto\frac{f(x)-f(a)}{x-a}$\kern-1em\null
           &$a\partial+1$\\
\hline\hline
\end{tabular}
\end{center}
\caption{Corresponding skew-polynomial rings and their commutation rules}
\label{Ore-alg-table}
\end{table}

\parag{Ore algebras}
Ore algebras are a generalization of skew-polynomial rings well suited to the manipulation of multivariate special functions.
Let $C(\bx)=C(x_1,\dots,x_m)$ be a field of characteristic~0
of rational functions, $(\sigma_1,\dots,\sigma_n)$ be $n$ $C$-algebra morphisms of $C(\bx)$ commuting pairwise, and for each~$i$, let $\delta_i$ be a $\sigma_i$-derivation, such that the $\delta_i$'s commute pairwise and commute with the~$\sigma_j$'s when $j\neq i$. The \emph{Ore algebra\/}~$\Ore_\bx=C(\bx)\langle\bd\rangle$ is the associative $C(\bx)$-algebra generated by indeterminates $\bd=\{\partial_1,\dots,\partial_n\}$ modulo the relations
\begin{equation}\label{Ore-commutation}
 \partial_ia=\sigma_i(a)\partial_i+\delta_i(a)\quad(a\in C(\bx)),\qquad
  \partial_i\partial_j=\partial_j\partial_i.
\end{equation}

An Ore algebra~$\Ore$ or an extension $A\otimes_{C(\bx)}\Ore$ of it is called \emph{linear\/} when for each~$i$, both~$\sigma_i$ and~$\delta_i$ can be expressed as $C(\bx)$-linear combinations of $\partial_i$ and the identity.

Given two Ore algebras $\Ore_{\bx}$ and~$\Ore_{\bt}$, we write $\Ore_{\bx,\bt}$ or with an abuse of notation $C(\bx,\bt)\langle\bdx,\bdt\rangle$ for $\Ore_{\bx} \otimes_C \Ore_{\bt}$.

\subsection{Ideals}
We write $I\ideal R$ to denote that $I$~is a left ideal
in the ring~$R$, and $I=\<p_1,\dots,p_k>$ to denote that $I$ is generated by
$p_1,\dots,p_k$. 
The study of a function or sequence~$f$ translates algebraically into the study of its \emph{annihilating\/} ideal~$\ann_Af$, i.e., the left ideal of operators in an appropriate algebra~$A$ that annihilate~$f$.
(We also write~$\ann f$ when no ambiguity on~$A$ can arise.)
Computations concern finding generators of this ideal or, at least, of a sufficiently large subideal of it. This is in particular the case for creative telescoping that we study here. It computes an ideal annihilating the definite integral or sum of interest starting from a description of an ideal annihilating the summand or integrand.
 
\subsection{Gröbner Bases}
\parag{Terms}
If $f=\sum{c_{\balpha}\bd^{\balpha}}$ is a polynomial, each $c_{\balpha}\bd^{\balpha}$ for which $c_{\balpha}\neq0$ is called a \emph{term\/} of~$f$, $c_{\balpha}$ is its coefficient, $\bd^{\balpha}$ its monomial, and $\balpha$ its exponent. The total degree of~$f$ is the maximum~$|\balpha|$ over its terms, where we use the notation\\ $|\balpha|=|(\alpha_1,\dots,\alpha_n)|=\alpha_1+\dots+\alpha_n$. We also use~$|S|$ for the cardinality of a set~$S$, but this should not create confusion.

\parag{Monomial orders} A \emph{monomial order\/} is a total order on the monomials that is compatible with the product and does not have infinite descending chains. 
A \emph{graded order\/} is a monomial order such that ${\bd}^{\balpha}>{\bd}^{\bbeta}$ whenever $|\balpha|>|\bbeta|$.

\parag{Gröbner bases}
For a given monomial order, the leading term of a polynomial~$f$ is the term with largest monomial for that order. We write~$\expo(f)$ for its exponent in~$\N^n$. The crucial property that lets the theory parallel that of the commutative case is that~$\expo(fg)=\expo(f)+\expo(g)$. 
The set~$S_I:=\{\expo(f),f\in I\}$ thus has the property
$S_I=S_I+\N^n$.
The complement of~$S_I$ is a finite union of translates of coordinate subspaces.
A \emph{Gröbner basis~$G$} of~$I$ is a set of generators~$I$ such that~$S_I$ is the sum of~$\expo(g)+\N^{n}$ over $g\in G$.

\subsection{Hilbert Dimension}
We write $R^{(s)}$ for the set of polynomials of an Ore algebra~$R$ of total degree at most~$s$.
Then $I^{(s)}:=I\cap R^{(s)}$ is a vector space over $C(\bx)$. As in the commutative case, the Hilbert function of the ideal~$I$ is defined by~$\operatorname{HF}_I(s):=\dim(R^{(s)}/I^{(s)})=\dim R^{(s)}-\dim I^{(s)}$; for~$s$ large enough this function is equal to a polynomial whose degree is called the \emph{(Hilbert) dimension\/} of the ideal. We denote this integer~$\dim I$, or $\dim_R I$ when we want to make the ambient ring explicit. A reference for the results of this section is~\cite{KondratievaLevinMikhalev1999}.

\begin{example}[$\partial$-finite ideals]
An ideal is  \emph{$\partial$-finite} if its dimension is~0. This special class of ideals has been studied a lot from the computational point of view: in this case, the quotient~$R/I$ is a finite-dimensional vector space, so that many techniques of linear algebra apply.
\end{example}

\begin{example}[Hypergeometric terms]\label{ex:hgm} An $n$-variate sequence
$u_{m_1,\dots,m_n}$ is a \emph{hypergeometric term}
if
	\[ \frac{u_{m_1,\dots,m_{i-1},m_i+1,m_{i+1},\dots,m_n}}{u_{m_1,\dots,m_n}}\in\Q(m_1,\dots,m_n)\]
for~$i=1,\dots,n$. If $\bS=(S_1,\dots,S_n )$, $S_i$ representing the \emph{shift operator}, the annihilating ideal of such a sequence in the algebra $\Q(\bm)\langle\bS \rangle$ contains operators~$S_1-r_1(\bm),\dots,S_n-r_n(\bm)$. It is therefore $\partial$-finite and moreover, the dimension of the quotient as a vector space is~1.
\end{example}

\begin{example}[Stirling numbers]
	Stirling numbers of the second kind, $S_2(n,k)$, satisfy the linear recurrence
	\[S_2(n,k)=S_2(n-1,k-1)+kS_2(n-1,k).\]
The ideal generated by this relation in $\Q(n,k)\langle S_n,S_k\rangle$ has dimension~1. 
Properties of the generating series imply that this is not a $\partial$-finite sequence, so that 1~is the lowest possible dimension to work with.
\end{example}

The dimension of an ideal can be computed from a Gröbner basis for a graded order: it is the largest dimension of coordinate subspaces of~$\N^{n}$ that belong to the complement of the set of exponents of the leading terms~\cite[p.~449]{BeckerWeispfenning1993}.

We note the following inequalities that also hold in this non-commutative context:
\begin{align*}
I\subset J&\Rightarrow \dim J\le\dim I,\\
I\ideal C(\bx,\bt)\langle\bdx,\bdt\rangle &\Rightarrow \dim_{C(\bx,\bt)\langle\bdx\rangle}(I\cap C(\bx,\bt)\langle\bdx\rangle)\notag\\
   &\qquad\le\dim_{C(\bx,\bt)\langle\bdx,\bdt\rangle} I.
\end{align*}      
The first one follows from the inclusion of the vector spaces $I^{(s)}\subset J^{(s)}$. The second inequality involves dimensions relative to two different ambient rings. It can be seen by considering the following vector spaces: $F=C(\bx,\bt)\langle\bdx,\bdt\rangle^{(s)}$ contains $G=C(\bx,\bt)\langle\bdx\rangle^{(s)}$ and $H=I^{(s)}$. Then the inequality follows from $G/(G\cap H)\subset F/H$.

We also make use of the following properties of dimension.
\begin{lemma} Let $I\ideal \Ore_\bx=C(\bx)\langle\bdx\rangle$ with $\dim I=d$. 
\begin{enumerate}
	\item For any $\bdt\subset\bdx$,
	$|\bdt|\ge d+1\Longrightarrow I\cap C(\bx)\langle\bdt\rangle\neq\{0\}.$
	\item There exists~$\bdt\subset\bdx$ of cardinality~$d$ such that\\ $I\cap C(\bx)\langle\bdt\rangle=\{0\}$. 
\end{enumerate}
\end{lemma}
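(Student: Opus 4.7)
My plan is to handle the two parts by complementary tools: part~(1) by direct application of the second inequality displayed just before the lemma, and part~(2) by the Gröbner basis characterization of $\dim I$ recalled above, combined with an upward-closure argument on the staircase.

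For part~(1), I would apply the same filtration-and-embedding argument that justified the preceding displayed inequality, now to the inclusion $R_1 := C(\bx)\langle\bdt\rangle \subset R := C(\bx)\langle\bdx\rangle$ obtained by dropping the differentials in $\bdx\setminus\bdt$. The resulting embedding $R_1^{(s)}/(R_1^{(s)}\cap I^{(s)}) \hookrightarrow R^{(s)}/I^{(s)}$ of $C(\bx)$-vector spaces gives $\dim_{R_1}(I\cap R_1) \le \dim_R I = d$. If, to the contrary, $I\cap R_1 = \{0\}$, then $\dim_{R_1}(I\cap R_1)$ is the growth degree of the Hilbert function of $R_1$ itself, namely $\binom{s+|\bdt|}{|\bdt|}$, so $|\bdt|\le d$. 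Contrapositively, $|\bdt|\ge d+1$ forces $I\cap R_1\neq\{0\}$.

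For part~(2), fix a graded monomial order and decompose $\N^n\setminus S_I$ as a finite union of translates of coordinate subspaces $\bgamma_j+\N^{\bdt_j}$, where $\N^{\bdt_j}$ denotes the set of $\balpha\in\N^n$ vanishing outside $\bdt_j$, so that $d=\max_j|\bdt_j|$. Choose $j$ with $|\bdt_j|=d$ and set $\bdt:=\bdt_j$, $\bgamma:=\bgamma_j$. The crux is to promote the translate $\bgamma+\N^{\bdt}$ lying in the complement to the origin copy $\N^{\bdt}$ also lying in the complement: any putative $\balpha\in\N^{\bdt}\cap S_I$ would give $\bgamma+\balpha\in S_I$ by upward closure $S_I+\N^n=S_I$, contradicting $\bgamma+\N^{\bdt}\subseteq\N^n\setminus S_I$. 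With $\N^{\bdt}\cap S_I=\emptyset$ in hand, any nonzero $p\in I\cap C(\bx)\langle\bdt\rangle$ would have $\expo(p)\in\N^{\bdt}\cap S_I$, which is impossible, so $I\cap C(\bx)\langle\bdt\rangle=\{0\}$.

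The only nonroutine step is exactly this upward-closure upgrade in part~(2), from a \emph{translate} of a coordinate subspace lying in the complement to the origin copy lying in the complement; everything else is bookkeeping with the same vector-space embedding used for the displayed inequality and the Becker--Weispfenning dimension characterization already stated in the excerpt.
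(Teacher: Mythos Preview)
Your argument is correct and matches the paper's approach. For part~(1) the paper phrases the same dimension count directly---showing $\dim I^{(s)}+\dim C(\bx)\langle\bdt\rangle^{(s)}>\dim\Ore_\bx^{(s)}$ for large~$s$ forces a nontrivial intersection of subspaces---while for part~(2) it simply cites Cox--Little--O'Shea for the existence of a $d$-dimensional coordinate subspace in the complement of~$S_I$, so your upward-closure step is just making that citation explicit.
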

\begin{proof}             
We sketch the proof which shows how the notions presented so far interact (exactly as in the commutative case). 
By definition of the Hilbert function,
\begin{align*}
&\dim I^{(s)}+\dim C(\bx)\langle\bdt\rangle^{(s)}\\
={}&\dim \Ore_\bx^{(s)}+\dim C(\bx)\langle\bdt\rangle^{(s)}-\operatorname{HF}_I(s)\\
={}&\dim \Ore_\bx^{(s)}+\binom{|\bdt|+s}{s}-\operatorname{HF}_I(s).
\end{align*}
The binomial is a polynomial in~$s$ of degree $|\bdt|\ge d+1$ with positive leading coefficient, so that for large enough~$s$, it is larger than~$\operatorname{HF}_I(s)$. The sum of the dimensions of the vector subspaces on the left-hand side is therefore larger than the dimension of~$\Ore_\bx^{(s)}$ and thus they intersect nontrivially.

The second part follows by considering the exponents of leading
terms. Now the combinatorial theory is exactly as
in~\cite[Ch.~9]{CoxLittleOShea1996}.  There exists a coordinate
subspace of~$\N^{n}$ of dimension~$d$ in the complement set. This
means that there exists a subset~$\bd_{\bt}\subset\bd_{\bx}$ such that
no monomial in~$\bd_{\bt}$ is a leading term of an element
of~$I$. Thus there cannot be an element of~$I$ in those variables
only, as was to be proved.
\end{proof}

\section{Closure Properties} \label{closure}    
Our main result, Thm.~\ref{thm:ct} in the next section, generalizes the fact that holonomy is preserved under definite integration. First, we show how addition, multiplication, and action by~$\partial$ behave with respect to dimension, generalizing the corresponding closure properties for $\partial$-finite ideals.

\begin{theorem}[Closure Properties]\label{thm:closure}
 Let $I_1,I_2\ideal\Ore_\bx$ and let $f_1,f_2$ be annihilated by $I_1,I_2$, respectively.
 Then:
 \begin{enumerate}
 \item\label{item:1} $\dim\ann(\partial\cdot f_1)\leq\dim I_1$ for all~$\partial$ in~$\{\bdx\}$.
 \item\label{item:2} $\dim\ann(f_1+f_2)\leq\max(\dim I_1,\dim I_2)$.
 \item\label{item:3} If $f_1,f_2$ belong to the coefficient ring of a linear extension of\/~$\Ore_\bx$, then
$\dim\ann(f_1f_2)\leq\dim I_1+\dim I_2$.
 \end{enumerate}
\end{theorem}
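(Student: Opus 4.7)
The plan is to reduce each item to a bound on the Hilbert function of an auxiliary left ideal contained in the annihilator of interest: by the monotonicity $I\subseteq J\Rightarrow\dim J\le\dim I$ recalled just before the theorem, it suffices in each case to exhibit a concrete ideal of operators killing the target function and to show its Hilbert function is eventually a polynomial of the claimed degree.

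For (1), the candidate ideal I would use is $J:=\{Q\in\Ore_\bx:Q\partial\in I_1\}$, which is a left ideal contained in $\ann(\partial\cdot f_1)$. I would then consider the $C(\bx)$-linear map $\Ore_\bx^{(s)}\to\Ore_\bx^{(s+1)}/I_1^{(s+1)}$ sending $Q\mapsto Q\partial\bmod I_1$; its kernel is exactly $J^{(s)}$, yielding $\operatorname{HF}_J(s)\le\operatorname{HF}_{I_1}(s+1)$, and shifting the argument of a polynomial does not change its degree. For (2), I would take the auxiliary ideal to be $I_1\cap I_2\subseteq\ann(f_1+f_2)$ and use the diagonal embedding $\Ore_\bx/(I_1\cap I_2)\hookrightarrow\Ore_\bx/I_1\oplus\Ore_\bx/I_2$, which restricts in each filtration level to an injection giving $\operatorname{HF}_{I_1\cap I_2}(s)\le\operatorname{HF}_{I_1}(s)+\operatorname{HF}_{I_2}(s)$, a quasi-polynomial of eventual degree $\max(\dim I_1,\dim I_2)$.

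Item (3) is the substantial step. I would exploit the linearity hypothesis, which provides, for each index~$i$, scalars $\alpha_i,\beta_i,\gamma_i,\rho_i\in C(\bx)$ such that $\sigma_i(u)=\alpha_i\,\partial_i\cdot u+\beta_i u$ and $\delta_i(u)=\gamma_i\,\partial_i\cdot u+\rho_i u$ for every $u$ in the coefficient ring. Substituting into the skew Leibniz rule $\partial_i\cdot(uv)=\sigma_i(u)\,\partial_i\cdot v+\delta_i(u)v$ expresses $\partial_i\cdot(uv)$ as a $C(\bx)$-linear combination of the four products $(\partial_i^a\cdot u)(\partial_i^b\cdot v)$ with $a,b\in\{0,1\}$. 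Iterating and using the pairwise commutation of the $\partial_j$'s, a straightforward induction on $|\balpha|$ would then show that $\bd^{\balpha}\cdot(f_1f_2)$ is a $C(\bx)$-combination of products $(\bd^{\bbeta}\cdot f_1)(\bd^{\bgamma}\cdot f_2)$ with $\bbeta\le\balpha$ and $\bgamma\le\balpha$ componentwise, hence with $|\bbeta|,|\bgamma|\le s$ as soon as $|\balpha|\le s$. The image $\Ore_\bx^{(s)}\cdot(f_1f_2)$ therefore sits inside the span of such products, whose $C(\bx)$-dimension is at most $\operatorname{HF}_{I_1}(s)\operatorname{HF}_{I_2}(s)$, a polynomial eventually of degree $\dim I_1+\dim I_2$, giving the announced bound on $\dim\ann(f_1f_2)$.

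The main difficulty will be the Leibniz induction in (3): without the linearity hypothesis, $\sigma_i(u)$ and $\delta_i(u)$ could be arbitrary elements of the coefficient ring, which would ruin any a priori control on the order of derivatives appearing in the expansion. A secondary but routine concern is checking, with the commutation relations between $\sigma_j,\delta_j$ and the $\partial_i$'s for $i\ne j$, that the multivariate induction genuinely preserves the componentwise bound $\bbeta,\bgamma\le\balpha$ at each step rather than only some looser total-degree inequality.
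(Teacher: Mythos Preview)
Your proposal is correct and, for part~(3), follows essentially the same route as the paper: expand $\bd^{\balpha}\cdot(f_1f_2)$ via the skew Leibniz rule using the linearity hypothesis, then count. The paper packages the count slightly differently---it fixes $k=\dim I_1+\dim I_2+1$ variables and argues (via the contrapositive of Lemma~1(2)) that any such $k$ variables are dependent modulo $\ann(f_1f_2)$---whereas you bound $\operatorname{HF}_{\ann(f_1f_2)}(s)$ directly by $\operatorname{HF}_{I_1}(s)\operatorname{HF}_{I_2}(s)$. Your formulation is a bit more streamlined and avoids the detour through Lemma~1. One small remark: the paper asserts the sharper bound $|\bbeta+\bgamma|\le\deg P$ in the Leibniz expansion, which in a general linear Ore algebra is not quite right (the term $\alpha_i(\partial_i\cdot u)(\partial_i\cdot v)$ already has $|\bbeta+\bgamma|=2$ for $\deg P=1$); your componentwise bound $\bbeta,\bgamma\le\balpha$ is the safe one, and either version suffices for the $O(s^{\dim I_1+\dim I_2})$ count.

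For parts (1) and~(2) the paper says only ``similar and simpler'', so your explicit constructions---the right-quotient ideal $J=\{Q:Q\partial\in I_1\}$ with the map $Q\mapsto Q\partial\bmod I_1$, and the diagonal embedding into $\Ore_\bx/I_1\oplus\Ore_\bx/I_2$---are a welcome addition; they are the standard arguments and work exactly as you sketch.
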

\begin{proof}
We show part~\ref{item:3}. The arguments for parts \ref{item:1} and~\ref{item:2} are similar and simpler.

Setting $k:=\dim I_1+\dim I_2+1$, it suffices to show that any $k$~elements
$\partial_1,\dots,\partial_k$ among $\bpartial$ are dependent modulo~$\ann(f_1f_2)$. 

Given a Gröbner basis for $I_1$ with respect to a graded order, each polynomial~$P\in\Ore_\bx$ can be reduced to a normal form~$\overline{P}$ such that~$P\cdot f_1=\overline{P}\cdot f_1$ and $\deg\overline{P}\le\deg P$. Moreover, by definition of the dimension, the set of all monomials in~$\{\overline{P}\mid P\in R^{(s)}\}$ has cardinality growing like~$O(s^{\dim I_1})$. The same considerations hold for~$f_2$.

By induction on the degree of~$P$, the condition of a linear extension implies that $P\cdot(f_1f_2)$ rewrites as a linear combination of  monomials~$(\bpartial^{\bbeta}\cdot f_1)(\bpartial^{\bgamma}\cdot f_2)$, with~$|\bbeta+\bgamma|\le\deg P$. Moreover, we can assume that the monomials have been reduced to their normal forms and the inequalities still hold.

Let~$s\ge0$ and consider the following identities
\begin{equation}\label{eq:3}
 \partial_1^{\alpha_1}\cdots\partial_k^{\alpha_k}\cdot f_1f_2= 
  \sum_{|\bbeta|\leq s}\sum_{|\bgamma|\leq s}
    u_{\balpha;\bbeta,\bgamma}(\bx)
    \bigl(\bpartial^{\bbeta}\cdot f_1\bigr)
    \bigl(\bpartial^{\bgamma}\cdot f_2\bigr),
\end{equation}
($|\balpha|\leq s$), where the sums on the right are constructed as above.
Then the first sum actually ranges over a subset $\{\bbeta,|\bbeta|\leq s\}$ of cardinality
$O(s^{\dim I_1})$ and the second one ranges over a subset $\{\bgamma,|\bgamma|\leq s\}$ of 
cardinality $O(s^{\dim I_2})$. Thus there is a generating set of $O(s^{\dim I_1}s^{\dim I_2})
= O(s^{k-1})$ monomials for all the summands. This implies that for $s$ large
enough, there exists a nontrivial linear combination of the $O(s^k)$ polynomials in~\eqref{eq:3}
of the form
\[
 \sum_{|\balpha|\leq s} w_{\balpha}(\bx)\partial_1^{\alpha_1}\cdots\partial_k^{\alpha_k}\cdot(f_1f_2)=0,
\]
as we wanted to show.  
\end{proof}

\parag{Algorithm}
The proof gives an algorithm that computes generators of a subideal of the desired annihilating ideal, with a dimension that obeys the inequality. For increasing~$s$, compute the normal forms of all monomials in~$\bpartial$ of degree at most~$s$, compute linear combinations between them (the kernel of the matrix $(u_{\balpha;\bbeta,\bgamma})$), and return these relations if they are sufficiently many to obtain the dimension of the theorem.  For $\partial$-finite ideals, this returns  the same result as the algorithms in~\cite{ChyzakSalvy1998}. Various optimizations are possible. The other closure operations are similar.

\begin{example}\label{ex:doublestirling}
The sequence
\[
 f_{n,m,k,l}=\binom nkS_2(k,l)S_2(n-k,m),
\]
is annihilated by an ideal of dimension at most~2. 
This follows from Thm.~\ref{thm:closure} by observing: \emph{(i)\/}~that $\binom nk$ is hypergeometric, thus
$\partial$-finite, and thus annihilated by an ideal of dimension~0; \emph{(ii)\/}~that $S_2(k,l)$ and $S_2(n-k,m)$ are
Stirling-like (see Section \ref{Stirling-like}) and thus annihilated by certain ideals of dimension~1. 
More specifically, the factors $\binom nk$, $S_2(k,l)$, and $S_2(n-k,m)$ are annihilated by the ideals
\begin{alignat*}1
 &\<(k-n-1) S_n+n+1,(k+1) S_k+k-n,S_m-1,S_l-1>,\\
 &\<S_n-1,S_k S_l-(l+1) S_l-1,S_m-1>,\\
 &\<S_n S_k-1,(m+1) S_m S_k+S_k-S_m,S_l-1>,
\end{alignat*}
respectively. The algorithm sketched above yields
\begin{alignat*}1
 I:=\langle &1 + n + (1 + m) (1 + n) S_m - (1 - k + n) S_nS_m, \\
   &(k - n) S_m + 
 (1 + k) S_kS_l  + (1 + k) (1 + m) S_kS_lS_m\\
 &{}+ (1 + l) (k - n) S_lS_m ,
 1 + n + (1 + l) (1 + n) S_l\\
 &{} - (1 + k) S_kS_lS_n\rangle
	\ideal C(n,m,k,l)\<S_n,S_m,S_k,S_l>
\end{alignat*}
as an ideal of annihilators of $f_{n,m,k,l}$.
It has dimension~2.
\end{example}

\section{Creative Telescoping}
Creative telescoping is basically a combination of differentiation under the integral sign and integration by parts, or analogues for other operators. We now give it an algebraic interpretation. Our main theorem can be viewed as predicting cases when identities are bound to exist.

\subsection{Telescoping of an Ideal}

The heart of the method of creative telescoping translates algebraically into the notion of the telescoping of an ideal.
\begin{definition}
Let $I\ideal \Ore_{\bx,\bt}=C(\bx,\bt)\langle\bdx,\bdt\rangle$ be a left ideal.
Assume $|\bt|=|\bdt|$.
We define the \emph{telescoping} of~$I$ with respect to~$\bt=(t_1,\dots,t_k)$ as the left ideal of $C(\bx)\langle\bdx\rangle$
\[
 T_{\bt}(I):=\left(I+\partial_{t_1}\Ore_{\bx,\bt}+\dots+\partial_{t_k}\Ore_{\bx,\bt}\right)\cap C(\bx)\langle\bdx\rangle.
\]
\end{definition}

\begin{definition}
The variables~$\bdt=(\partial_{t_1},\dots,\partial_{t_k})$ of the Ore algebra~$\Ore_{\bx,\bt}$ are \emph{telescopable} if there exist elements $a_1,\dots,a_k$ in $C(\bx,\bt)$ 
such that
\[\delta_{t_i}(a_i)\in C(\bx)\setminus\{0\}\quad\text{and}\quad\sigma_{t_i}(a_i)\partial_{t_j}=\partial_{t_j}\sigma_{t_i}(a_i)\ \ (j\neq i).\]
\end{definition}
Note that this is a condition on the algebra, and does not depend on any specific ideal. 
In view of Table~\ref{Ore-op-table}, this is not a strong restriction for our applications.
For example, the differential operator~$d/dt$ and the difference operator~$\Delta_t$ are telescopable, with $a=t$, but the shift operator is not.
This notion lets us generalize an idea of Wegschaider~\cite{wegschaider1997} used at the end of the proof of Thm.~\ref{thm:ct} below.

\subsection{Polynomial Growth}

In Thm.~\ref{thm:ct} below we give an upper bound for the dimension of~$T_\bt(I)$, thus providing a termination criterion for the algorithms in Section~\ref{algos}.
Our bound depends on the dimension of~$I$ and on its ``polynomial growth'', defined as follows.

\begin{definition}\label{def:p}
 The left ideal~$I\ideal \Ore_{\bx,\bt}$ has \emph{polynomial growth} $p$ with respect to a given graded order if there exists a sequence of polynomials~$P_s(\bx,\bt)$, $s\in\N$,
 such that for any $\balpha$ with
 $|\balpha|\leq s$, the normal form of $P_s(\bx,\bt)\bd^\alpha$ 
 with respect to a Gröbner basis of~$I$ for the order
 has coefficients in $C(\bx)[\bt]$
 whose degrees with respect to $\bt$ are $O(s^p)$.  
 We say that $I$ has polynomial growth~$p$ when there is a graded order with respect to which it does.
\end{definition}

If all the $\sigma$'s are automorphisms, the polynomial growth is bounded by the dimension of the Ore algebra, $|\bx|+|\bt|$.
But the interesting cases are those where the polynomial growth is 
smaller than that. 
For certain ideals of dimension~0, we get a better estimate in Thm.~\ref{thm:p-for-diff-diff-alg} below.
For an arbitrary ideal in an arbitrary Ore algebra, we do not know how to determine 
its polynomial growth algorithmically yet.

\begin{example}\label{ex:doublestirling:2}
 The basis at the end of Example~\ref{ex:doublestirling} is a Gröbner basis with
 respect to a graded order. Inspection of its leading coefficients shows that the 
 sequence
 \[
   P_s(n,m,k):=\prod_{|j|\leq s}(1+n-k+j)(1+k+j)^2
 \]
 satisfies the conditions of Definition~\ref{def:p}. Since $\deg P_s=O(s)$, it follows that
 $I$ has polynomial growth~1.
\end{example}

Just before stating our result, we now give notation and a definition for algebras amenable to it.

\begin{definition}
Set $R=C(\bx)[\bt]$ and let $R_{\leq n}$~be the set of elements~$A\in R$ such that\/~$\deg_\bt A\leq n$.
A \emph{difference-differen\-tial algebra} is an Ore algebra~$\Ore_{\bx,\bt}$  such that for any~$i$, either 
\begin{itemize}
\item $\delta_i=0$ and, for any $u\in R$, $\deg_\bt\sigma_i(u) = \deg_\bt u$;
\item or $\sigma_i=\operatorname{Id}$ and $\delta_i$~is a derivation such that, for any $u\in R$, $\deg_\bt\delta_i(u)\leq\nu+\deg_\bt u$,
\end{itemize}
$\nu\in\N$ being fixed.
We set $\mathcal S$ and~$\mathcal D$ to be the sets of~$i$ of, resp., first and second types.
\end{definition}

\begin{theorem}\label{thm:p-for-diff-diff-alg}
Let\/ $\Ore_{\bx,\bt}$ be a difference-differential algebra endowed with a graded ordering.
Let $I\ideal\Ore_{\bx,\bt}$ have dimension~0. Call~$\phi$ the map sending any $A\in\Ore_{\bx,\bt}$ to its normal form modulo~$I$ w.r.t.~the given graded ordering, and~$\Gamma$ the (finite) set of monomials in normal form.
Then, there exist $m,\ell\in\N$ and $L\in R_{\leq\ell}$ such that, for any~$i$ and any~${\bbeta\in\Gamma}$,
\begin{equation}\label{eq:uniform-reductions}
\phi(\partial_i\bd^{\bbeta}) \in \frac1L \sum_{\bgamma\in\Gamma} R_{\leq m}\bd^{\bgamma} .
\end{equation}
Define a sequence~$(P_s)_{s\in\N}$ by~$P_0=1$ and
\[P_{s+1}:=\begin{cases}
\lcm(P_s,\lcm(L\sigma_i(P_s)),P_s\lcm(L,Q_s)),&\text{if $\mathcal{D}\neq\emptyset$,}\\
\lcm(P_s,\lcm(L\sigma_i(P_s))),&\text{if $\mathcal{D}=\emptyset$,}
\end{cases}
\]
where $Q_s$~denotes the squarefree part of~$P_s$ and the lcm in~$\sigma_i$ is over all~$i$.

If $\deg_\bt P_s = \Theta(s^p)$ for some integer~$p>0$, then $I$~has polynomial growth~$p$.
\end{theorem}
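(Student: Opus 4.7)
The plan is to first read off the witness $L$, $\ell$, $m$ from a finite amount of data (using $\dim I = 0$), then to prove the polynomial-growth claim by induction on~$s$, showing that the clauses of the recursion for $P_{s+1}$ exactly cancel the two kinds of denominators produced, on the one hand, by the uniform reductions~\eqref{eq:uniform-reductions} and, on the other hand, by commuting $\partial_i$ past $P_s$.

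For the witnesses, $\Gamma$~is finite since $\dim I = 0$, so the family of normal forms $\phi(\partial_i \bd^{\bbeta})$ for $(i, \bbeta)$ ranging over a finite set has only finitely many denominators in $C(\bx)[\bt]$. Their common denominator~$L$, its $\bt$-degree~$\ell$, and the largest $\bt$-degree~$m$ among the rescaled numerators yield~\eqref{eq:uniform-reductions}. I then prove by induction on~$s$ the claim: \emph{for every $\balpha$ with $|\balpha|\leq s$, the normal form $\phi(P_s\,\bd^{\balpha})$ has all coefficients in~$R$, each of $\bt$-degree $O(s^p)$.} The base $s=0$ is immediate. For the step, fix $|\balpha|=s+1$, pick a decomposition $\balpha = \balpha' + e_i$ with $|\balpha'|=s$, and start from the Ore identity $\sigma_i(P_s)\,\bd^{\balpha} = \partial_i P_s\,\bd^{\balpha'} - \delta_i(P_s)\,\bd^{\balpha'}$. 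Multiplying on the left by $P_{s+1}/\sigma_i(P_s)$, applying the $C(\bx,\bt)$-linear reduction map~$\phi$, using $\phi(\partial_i g)=\phi(\partial_i \phi(g))$ (valid since $I$~is a left ideal), the inductive hypothesis on $\phi(P_s\,\bd^{\balpha'})$, and~\eqref{eq:uniform-reductions} applied to each $\phi(\partial_i \bd^{\bgamma})$, I reach an expression
\[
\phi(P_{s+1}\,\bd^{\balpha}) \;=\; \frac{P_{s+1}}{L\,\sigma_i(P_s)}\,U \;-\; \frac{P_{s+1}\,\delta_i(P_s)}{\sigma_i(P_s)\,P_s}\,V,
\]
in which $U, V \in \sum_{\bgamma} R\,\bd^{\bgamma}$ have coefficients of $\bt$-degree $O(s^p)$, by the inductive hypothesis and the difference-differential bounds on $\deg_\bt \sigma_i(r)$ and $\deg_\bt \delta_i(r)$.

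It remains to show that each rational factor in this expression is in fact a polynomial of $\bt$-degree $O(s^p)$. The first factor lies in~$R$ because the second clause of the recursion gives $L\sigma_i(P_s) \mid P_{s+1}$; its $\bt$-degree is at most $\deg_\bt P_{s+1} = O(s^p)$. For the second factor, if $i \in \mathcal{S}$ then $\delta_i=0$ and the term vanishes; if $i \in \mathcal{D}$ then $\sigma_i=\mathrm{Id}$, so the factor equals $P_{s+1}\,\delta_i(P_s)/P_s^2$. Here I invoke the elementary fact that, for any derivation, the ``repeated part'' $P_s/Q_s$ divides $\delta_i(P_s)$ (differentiating clears one copy of each multiple factor), which lets me rewrite the factor as $(P_{s+1}/(P_sQ_s))\cdot(\delta_i(P_s)\,Q_s/P_s)$; this is polynomial because the third clause $P_s\cdot\mathrm{lcm}(L,Q_s)$ of the recursion ensures $P_sQ_s \mid P_{s+1}$, and its $\bt$-degree is controlled by $\deg_\bt \delta_i(P_s)\leq \nu + \deg_\bt P_s$, completing the induction. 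The main obstacle I anticipate is precisely this matching between the three clauses of the recursion and the distinct denominators produced by one commutation step: the shift-type clause handles $L\sigma_i(P_s)$, while the derivation-type clause handles $P_s^2$ only after the squarefree-part divisibility $P_s/Q_s \mid \delta_i(P_s)$ has removed one factor of~$P_s$; once this matching is identified, the degree bookkeeping is a direct consequence of $\deg_\bt P_{s+1}=\Theta((s+1)^p)$.
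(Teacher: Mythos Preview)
Your argument is correct and mirrors the paper's proof: both carry the invariant $\phi(\bd^{\balpha})\in P_s^{-1}R_{\le D_s}[\Gamma]$ (you phrase it dually as $\phi(P_s\bd^{\balpha})$ having polynomial coefficients), split the induction step into the shift and derivation cases, and invoke the squarefree-part divisibility $P_s/Q_s\mid\delta_i(P_s)$ to match the three clauses of the recursion to the denominators $P_s$, $L\sigma_i(P_s)$, and $P_sQ_s$ produced by one commutation. The only omissions are cosmetic: you should also cover $|\balpha|\le s$ in the step (this is precisely where the first clause $P_s\mid P_{s+1}$ is used), and the degree bookkeeping is cleaner if, as the paper does, you track increments $D_{s+1}-D_s=O(s^{p-1})$ and sum, rather than inducting directly on an $O(s^p)$ statement whose implied constant is not fixed.
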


Before the proof, note that, by the definition of~$\phi$, the sum in~\eqref{eq:uniform-reductions} is limited to~$|\bgamma|\leq|\bbeta|+1$.
Moreover, the existence of a uniform~$(L,m)$ is easily obtained in the zero-dimensional case, as the set of~$(i,\bbeta)$ is then finite.

\begin{proof}
Introduce the sets $F_s=\{\bd^{\balpha}, |\balpha|\leq s\}$ for~$s\in\N$.
For~$n\in\N$, define $R_{\leq n}[\Gamma]$ as $\sum_{\bgamma\in\Gamma} R_{\leq n} \bd^{\bgamma}$.
For a fixed~$s$, suppose there exists $P\in R$ and an integer~$D$ for which
\[ \phi(F_s) \subset P^{-1}R_{\leq D}[\Gamma] . \]
We look for homologues $P'$ and~$D'$ for~$F_{s+1}$.
As $\phi$~satisfies
\begin{equation*}
\phi(F_{s+1}) = \phi\Bigl(F_s\cup\bigcup_i\partial_i F_s\Bigr) \subset
  \phi(F_s) \cup \bigcup_i\phi\bigl(\partial_i\phi(F_s)\bigr) ,
\end{equation*}
we study $\phi(\partial_i A)$ for~$A=P^{-1}U \bd^{\bbeta}$ when $U\in R_{\leq D}$, $\bbeta\in\Gamma$, and $|\bbeta|\leq s$.
If $i\in\mathcal S$, then from~$\delta_i=0$ follows $\partial_i A = \sigma_i(P)^{-1}\sigma_i(U)\partial^i\bd^{\bbeta}$;
therefore,
\[ \phi(\partial_i A)\in \bigl(L\sigma_i(P)\bigr)^{-1} R_{\leq D+m}[\Gamma] . \]
Else, $i\in\mathcal D$ and from~$\sigma_i$ being the identity follows $\partial_i A = P^{-1}U\partial^i\bd^{\bbeta} + P^{-1}\delta_i(U)\bd^{\bbeta} + \delta_i(P^{-1})U\bd^{\bbeta}$;
thus, $\phi(\partial_i A)$~is in
\begin{equation*}
\frac1{LP} R_{\leq D+m}[\Gamma] +
  \frac1{LP} R_{\leq D+\ell+\nu}\bd^{\bbeta} + \frac1{PQ} R_{\leq D+D_0}\bd^{\bbeta}
\end{equation*}
where $Q$~is the square-free part of~$P$ and $D_0=\deg_\bt Q$.
Defining $P'$ as the lcm of $P$, $P \lcm(L,Q)$ if~$\mathcal D\neq\emptyset$, and the~$\sigma_i(P)L$'s for~$i\in\mathcal S$ yields $A,\phi(\partial_i A) \in (P')^{-1} R[\Gamma]$.
Next, setting $D_1=\deg_\bt(P'/PQ)$ and $D_2=\deg_\bt(P'/LP)$, then $\Delta$ as the maximum of $\max\{m,\ell\}+D_2$ and $\max\{\ell+\nu+D_2,D_0+D_1\}$ if~$\mathcal D\neq\emptyset$, finally $D'=D+\Delta$ yields
\[ \phi(F_{s+1})\subset (P')^{-1} R_{\leq D'}[\Gamma]. \]
Therefore, introducing the sequence~$(P_s)$ defined in the theorem, we get by a first induction on~$s$ that $\phi(F_s)\subset P_s^{-1} R[\Gamma]$.
Next, under the additional assumption that $\deg_\bt P_s = \Theta(s^p)$ for some integer~$p>0$, $P$ and~$P'$ in the proof above can be taken as $P_s$ and~$P_{s+1}$, so that all of $D_0$, $D_1$, and~$D_2$ are bounded by $\deg P_{s+1}-\deg P_s$, thus by~$O(s^{p-1})$.
Therefore, by another induction
\[ \phi(F_s)\subset P_s^{-1} R_{\leq O(s^p)}[\Gamma], \]
proving a polynomial growth~$p$ for~$I$.
\end{proof}

In the uses of creative telescoping for summation, the identities (or ideals) are often stated in terms of shifts, while the operation of interest is a finite difference. However, it is possible to compute the polynomial growth in a difference-differential algebra with shifts, and it will be the same when considered in an algebra with difference operators. This is the meaning of the following theorem, whose proof based on transporting graded orderings we omit.

\begin{proposition}\label{prop:1}
For a given difference-differential algebra\/ $\Ore_{\bx,\bt}$ and indeterminates 
$\bu=(u_1,\dots,u_k)$, consider
\begin{alignat*}1
 \Ore_{\bx,(\bt,\bu)}&=\Ore_{\bx,\bt}\otimes_C C(\bu)
 [\partial_{u_1};S_{u_1},0]\dots[\partial_{u_k};S_{u_k},0]
 \quad\text{and}\\
 \Ore'_{\bx,(\bt,\bu)}&=\Ore_{\bx,\bt}\otimes_C C(\bu)
 [\partial_{u_1};S_{u_1},\Delta_{u_i}]\dots[\partial_{u_k};S_{u_k},\Delta_{u_k}],
\end{alignat*} 
where $\Delta_{u_i}=S_{u_i}-\operatorname{Id}$ ($i=1,\dots,k$).
Let\/ $\mu$ be the (bijective) left-$C(\bx,\bt,\bu)$-linear map sending each~$\partial_{u_i}$ to~$\partial_{u_i}+1$ and all the $\bdx,\bdt$ to themselves.
Then an ideal~$I\ideal\Ore_{\bx,(\bt,\bu)}$ and $\mu(I)\ideal\Ore'_{\bx,(\bt,\bu)}$ have the same polynomial growth.
\end{proposition}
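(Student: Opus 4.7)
The plan is to verify that $\mu$ is a ring isomorphism, that it transports Gröbner-basis structures from one algebra to the other, and that it preserves the degree in~$(\bt,\bu)$ controlling polynomial growth. First I would check that $\mu$ is a well-defined ring isomorphism: in $\Ore_{\bx,(\bt,\bu)}$ one has $\delta_{u_i}=0$ while in $\Ore'_{\bx,(\bt,\bu)}$ one has $\delta_{u_i}=S_{u_i}-\operatorname{Id}$, with $\sigma_{u_i}=S_{u_i}$ on both sides, so the identity $S_{u_i}=\Delta_{u_i}+\operatorname{Id}$ makes the substitution $\partial_{u_i}\mapsto\partial_{u_i}+1$ compatible with the commutation rule $\partial_{u_i}a=\sigma_{u_i}(a)\partial_{u_i}+\delta_{u_i}(a)$. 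Bijectivity is immediate since $\mu^{-1}$ sends $\partial_{u_i}$ to $\partial_{u_i}-1$.

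Next I would transport the given graded order verbatim to $\Ore'_{\bx,(\bt,\bu)}$ by using the same comparison on exponent vectors. Expanding $\mu(\partial_{u_i}^k)=(\partial_{u_i}+1)^k$ by the binomial formula shows that for any exponent~$\bgamma$, $\mu(\bd^\bgamma)$ equals $\bd^\bgamma$ plus an integer combination of $\bd^\bbeta$ with $\bbeta<\bgamma$ componentwise. Since componentwise comparison refines any graded monomial order, this implies that $\mu$ preserves the leading term of any element, so $\mathrm{LT}(\mu(I))=\mathrm{LT}(I)$ and a Gröbner basis~$G$ of~$I$ maps under~$\mu$ to a Gröbner basis of~$\mu(I)$ with the same set~$\Gamma$ of normal-form monomials. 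Using that $\Gamma$ is closed under componentwise~$\leq$, $\mu$ sends any element in the $C(\bx,\bt,\bu)$-span of~$\Gamma$ to another such element; since it is also congruent to $\mu(A)$ modulo $\mu(I)$, the normal-form maps satisfy $\phi'\circ\mu=\mu\circ\phi$.

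Finally, since $\mu$ fixes the coefficient ring $C(\bx,\bt,\bu)$ and acts on $\bdu$-monomials by integer linear combinations, it preserves the degree in~$(\bt,\bu)$ of coefficients. If $P_s$ witnesses polynomial growth~$p$ for~$I$, then for $|\balpha|\leq s$ one has
\[
\phi'(P_s\bd^\balpha)=\mu\bigl(\phi(P_s\mu^{-1}(\bd^\balpha))\bigr),
\]
and expanding $\mu^{-1}(\bd^\balpha)$ as an integer combination of~$\bd^\bbeta$ with $|\bbeta|\leq s$, the $C(\bx,\bt,\bu)$-linearity of~$\phi$ together with the assumed bound on each $\phi(P_s\bd^\bbeta)$ yield an $O(s^p)$ bound for~$\phi'(P_s\bd^\balpha)$. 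Thus the same $P_s$ witnesses polynomial growth~$p$ for~$\mu(I)$, and the converse follows by symmetry. The main step where I expect to need care is verifying that $\mu$ genuinely preserves leading terms in the full Ore-algebra setting despite the noncommutative left action of the~$\partial$'s on coefficients; this reduces to the fact that $\mu$ fixes $C(\bx,\bt,\bu)$ and the $\bdu$ commute with the other~$\bd$'s.
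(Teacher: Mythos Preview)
Your proposal is correct and follows precisely the route the paper indicates: the authors omit the proof entirely, stating only that it is ``based on transporting graded orderings,'' which is exactly the mechanism you implement (via the observation that $\mu(\bd^{\bgamma})=\bd^{\bgamma}+\text{lower terms}$, hence leading terms, Gr\"obner bases, and normal-form supports are preserved). Your identification of the one delicate point---that $\mu$ fixes the coefficient field and that the $\bdu$ commute with the remaining $\bd$'s---is the right place to focus, and your handling of it is sound.
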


\subsection{Main Result}

Our main result is the following sufficient condition for creative telescoping.
For simplicity, we state it for $|\bt|=|\bdt|$.

\begin{theorem}[Creative Telescoping]\label{thm:ct}
Let~$I\ideal\Ore_{\bx,\bt}=C(\bx,\bt)\langle\bdx,\bdt\rangle$ be an ideal of dimension $d$ and polynomial growth~$p$.
If\/ $|\bt|=|\bdt|$ and the $\bdt$'s are telescopable then
\[\dim_{\Ore_\bx}T_{\bt}(I)\le d+(p-1)|\bt|,\]
whenever this bound is nonnegative. In particular, when the bound is smaller than $|\bx|$, the ideal $T_{\bt}(I)$ is non-trivial.
\end{theorem}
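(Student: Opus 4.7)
I would prove the bound by an ansatz/counting argument in the spirit of the proof of Theorem~\ref{thm:closure}, combining all three hypotheses.
Set $L:=d+(p-1)|\bt|+1$ and pick any $L$ operators among the $\bdx$, say $\partial_1,\dots,\partial_L$. It suffices to find, for some $s$ large enough, scalars $w_{\balpha}(\bx)\in C(\bx)$, not all zero, together with operators $Q_i\in\Ore_{\bx,\bt}$ such that
\[\sum_{|\balpha|\le s}w_{\balpha}(\bx)\,\partial_1^{\alpha_1}\cdots\partial_L^{\alpha_L}-\sum_i\partial_{t_i}Q_i\in I,\]
since the left-hand side is then a nonzero element of $T_{\bt}(I)$. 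The plan is a linear-algebra count: fix a suitable degree ansatz for the $Q_i$'s and show that the resulting $C(\bx)$-linear system has more unknowns than constraints among the $w_{\balpha}$'s.

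I bound the number of linear constraints using dimension and polynomial growth. Multiplying the ansatz by the witness $P_s(\bx,\bt)$ from Definition~\ref{def:p} and reducing modulo $I$ via a graded Gröbner basis expresses the left-hand side in the basis of standard monomials $\bd^{\bgamma}$ with $|\bgamma|\le s$. By the dimension hypothesis there are $O(s^d)$ such standard monomials, and by polynomial growth each coefficient lies in $C(\bx)[\bt]$ of $\bt$-degree~$O(s^p)$. Decomposed into $\bt$-monomials, this gives $O(s^{d+p|\bt|})$ linear constraints over $C(\bx)$, whereas the $w_{\balpha}$'s provide only $\Theta(s^L)$ unknowns: the count falls short by a factor of~$s^{|\bt|}$.

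Telescopability supplies exactly this missing factor, in the style of Wegschaider~\cite{wegschaider1997}. The identity
\[c_i u=\partial_{t_i}(a_i u)-\sigma_{t_i}(a_i)\partial_{t_i}u\qquad\bigl(c_i=\delta_{t_i}(a_i)\in C(\bx)^\times\bigr)\]
lets one exchange a factor of~$t_i$ in a coefficient for a contribution in $\partial_{t_i}\Ore_{\bx,\bt}$ plus a rewriting involving $\sigma_{t_i}(a_i)$; the commutation $\sigma_{t_i}(a_i)\partial_{t_j}=\partial_{t_j}\sigma_{t_i}(a_i)$ ($j\ne i$) ensures that these rewritings compose freely across the $t_i$'s. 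Iterating per coordinate ``moves'' one layer of $\bt$-dependence per $t_i$ into the $Q_i$'s, effectively reducing the number of constraints from $O(s^{d+p|\bt|})$ down to $O(s^{d+(p-1)|\bt|})=O(s^{L-1})$. For $s$ large enough, the system has strictly more unknowns than constraints, yielding the dependence and the desired dimension bound.

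The final claim, that $T_{\bt}(I)$ is nontrivial when $d+(p-1)|\bt|<|\bx|$, is immediate since the zero ideal of $\Ore_\bx$ has dimension $|\bdx|=|\bx|$. The main obstacle is to make the Wegschaider step rigorous in the present generality: $I+\sum_i\partial_{t_i}\Ore_{\bx,\bt}$ is only a $C(\bx)$-vector subspace of $\Ore_{\bx,\bt}$ and not a $C(\bx,\bt)$-submodule, so each application of the telescopability identity produces extra $\delta_{t_i}$-commutators that must be carefully absorbed into the $\partial_{t_i}\Ore_{\bx,\bt}$ part—this is precisely what the compatibility condition on $\sigma_{t_i}(a_i)$ enables in the multivariate case.
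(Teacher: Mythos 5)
There is a genuine gap at the central counting step. Your ansatz draws its free coefficients $w_{\balpha}(\bx)$ only from the $\bdt$-free monomials $\partial_1^{\alpha_1}\dotsm\partial_L^{\alpha_L}$, so you have $\Theta(s^{L})=\Theta(s^{d+(p-1)|\bt|+1})$ unknowns against $O(s^{d+p|\bt|})$ constraints, and you propose to make up the shortfall by using the telescopability identity to cut the number of \emph{constraints} down to $O(s^{d+(p-1)|\bt|})$. That reduction is asserted, not proved, and it does not hold: the identity $c_iu=\partial_{t_i}(a_iu)-\sigma_{t_i}(a_i)\partial_{t_i}u$ only trades $\bt$-degree against a left factor $\partial_{t_i}$ that must already be present, so it can lower the $t_i$-degree of coefficients of monomials divisible by $\partial_{t_i}$ but leaves untouched the coefficients of the $\bdt$-free standard monomials, which alone contribute $\Theta(s^{p|\bt|})$ constraints. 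Already in the holonomic differential test case ($d=0$, $p=1$, $|\bt|=1$, $a=t$) the residual constraint space is $\Theta(s)$, not the $O(1)$ your scheme needs. There is a second, independent problem: since the $Q_i$ carry their own unknowns, ``more unknowns than constraints'' only yields a nontrivial solution of the joint system, and the solutions with all $w_{\balpha}=0$ and $Q_i\in I$ are always present; you would still have to show the projection onto the $w$-coordinates is nonzero.

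The paper closes the count on the other side of the ledger. Its ansatz ranges over \emph{all} monomials $\partial_1^{\alpha_1}\dotsm\partial_k^{\alpha_k}\bdt^{\balpha_{\bt}}$ with $|\balpha|\le s$, i.e., it first looks for an operator that is free of $\bt$ but may still involve $\bdt$; this gives $\Theta(s^{k+|\bt|})=\Theta(s^{d+p|\bt|+1})$ unknowns against the same $O(s^{d+p|\bt|})$ constraints, and the linear algebra goes through with no appeal to telescopability. The resulting $Q\in I\cap C(\bx)\<\bdx,\bdt>$ is then decomposed as $Q=R+\partial_{t_1}Q_1+\dots+\partial_{t_{|\bt|}}Q_{|\bt|}$ with $R\in C(\bx)\langle\partial_1,\dots,\partial_k\rangle\subset T_{\bt}(I)$, and telescopability enters only here, as a post-processing device in Wegschaider's spirit: if $R=0$, multiplying by $\sigma_{t_i}(a_i)$ for some $i$ with $Q_i\neq0$ produces $b_iQ_i+\sum_j\partial_{t_j}\tilde Q_j\in I$ with $b_i\in C(\bx)\setminus\{0\}$, and iterating (the total $\bdt$-degree drops) eventually yields a nonzero element of $T_{\bt}(I)$. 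So the role you assign to telescopability (improving the count) is the opposite of its actual role (guaranteeing that the operator produced by the count has a nonzero $\bdt$-free remainder); to repair your argument you should enlarge the ansatz as the paper does rather than try to shrink the constraint space.
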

\begin{proof}    
If $d+(p-1)|\bt|\ge|\bx|$, there is nothing to show.
Otherwise, by Lemma~1, it is sufficient to show that 
   any $k:=d+(p-1)|\bt|+1$ elements $\partial_1,\dots,\partial_k$ in $\{\bd_\bx\}$ are dependent modulo~$T_\bt(I)$.

   Let $s\geq0$ and consider the following set of members of~$I$ from the definition of the polynomial growth of~$I$:
\begin{equation}\label{reductions}
P_s(\bx,\bt)\partial_1^{\alpha_1}\dotsm\partial_k^{\alpha_k}\bd_{\bt}^{\balpha_{\bt}}
-\sum_{|\bbeta|\le s}{c_{\balpha,\bbeta}(\bx,\bt)\bd^{\bbeta}},\quad |\balpha|\le s.
\end{equation}
The coefficients $c_{\balpha,\bbeta}(\bx,\bt)$ can be viewed as
$C(\bx)$-linear combinations of monomials in~$\bt$ of degree $O(s^p)$.
The ideal~$I$ having dimension~$d$, these sums in~\eqref{reductions} for all $\balpha$ such
that $|\balpha|\le s$ actually range over a common subset of
$\{\bbeta,|\bbeta|\le s\}$ of cardinality $O(s^d)$.  Thus, there is a
generating set of $O(s^{d+p|\bt|})$ monomials in~$\bt,\bd$ for all the
summands. This implies that for $s$ large enough, there exists a
nontrivial linear combination of the~$O(s^{k+|\bt|})$ polynomials
in~\eqref{reductions} of the form
\[
  P_s(\bx,\bt)\sum_{|\balpha|\le s}C_{\balpha}(\bx)\partial_1^{\alpha_1}\dotsm\partial_k^{\alpha_k}\bdt^{\balpha_\bt}=:P(s)Q\in I.
\]
Multiplying by~$1/P_s(\bx,\bt)$ shows that $Q\in I\cap C(\bx)\langle\bdx,\bdt\rangle$.
Now the operators~$\partial_{t_i}$ commute with the coefficients of~$Q$. Thus, $Q$ can be rewritten
\begin{equation}\label{division}
Q=R+\partial_{t_1}Q_1+\dots+\partial_{t_{|t|}}Q_{|t|},
\end{equation}
with~$R\in C(\bx)\langle\partial_1,\dots,\partial_k\rangle\in T_\bt(I)$. If~$R\neq0$, we have found the element of~$T_\bt(I)$ we were looking for. Otherwise, since~$Q\neq0$, there exists a~$Q_i\neq0$. Since~$\partial_{t_i}$ is telescopable, there is an element~$a_i\in C(\bx,\bt)$ such that 
\[\sigma_{t_i}(a_i)\partial_{t_i}=b_i+\partial_{t_i}a_i,\qquad \sigma_{t_i}(a_i)\partial_{t_j}=\partial_{t_j}\sigma_{t_i}(a_i)
\]
for some $b_i\in C(x)\setminus\{0\}$.
Multiplying~\eqref{division} by~$\sigma_{t_i}(a_i)$ yields
\[I\ni \sigma_{t_i}(a_i)Q=b_iQ_i+\partial_{t_i}\tilde{Q}_i+\dots+\partial_{t_{|t|}}\tilde{Q}_{|t|}.\]
Now, $Q_i\neq0$ can be rewritten as in~\eqref{division}. 
Repeating this process if necessary,
we eventually reach a non-zero element of~$T_\bt(I)$, as was to be proved.
\end{proof}

\subsection{Examples}\label{sec:examples}

\parag{Proper hypergeometric sequences} 
These are sequences with two indices that are hypergeometric as in Example~\ref{ex:hgm} with the further 
constraint that they can be written
\begin{alignat}1 \label{eq:properhg}
  u_{m,k}=Q(m,k)\frac{\prod_{i=1}^u{(a_im+b_ik+c_i)!}}{\prod_{i=1}^v{(u_im+v_ik+w_i)!}}\xi^k,
\end{alignat}
where~$\xi\in\mathbb{C}$, $Q$~is a polynomial and the $a_i$'s, $b_i$'s, $u_i$'s, $v_i$'s are integers.
A typical example is the binomial coefficient $\binom m k$.
Since such a sequence is hypergeometric, a Gröbner basis of $\ann u_{m,k}$ for any order is 
formed by the relations
\[
 S_m-{u_{m+1,k}/u_{m,k}}\quad\text{and}\quad S_k-u_{m,k+1}/u_{m,k}.
\]
The normal form of~$S_m^{s_1}S_k^{s_2}$ with respect to this basis is simply the rational function~$u_{m+s_1,k+s_2}/u_{m,k}$. 
Since the $a_i,b_i,u_i,v_i$'s are all integers, the common denominator of all these rational functions for~$s_1+s_2\le s$ has a degree that grows only \emph{linearly} with~$s$~\cite{PetkovsekWilfZeilberger1996}, as it is bounded
by
\begin{alignat*}1
 P_s(m,k)&=Q(m,k)\prod_{|j|\leq(|u_i|+|v_i|)s}\prod_i (u_i m+v_i k+ w_i + j)\\
         &\qquad{}\times\prod_{|j|\leq(|a_i|+|b_i|)s}\prod_i (a_i m+b_i k+ c_i + j).
\end{alignat*}
In our terminology, the corresponding annihilating ideal has polynomial growth~1, by Thm.~\ref{thm:p-for-diff-diff-alg} and Prop.~\ref{prop:1}.
Further generalizations to the multivariate and $q$-cases 
can also be considered.
Thm.~\ref{thm:ct} (with $d=0, p=1, |\bt|=1, n=2)$ generalizes the result that creative telescoping applies to proper hypergeometric sequences. 

\parag{General hypergeometric sequences}
Not every hypergeometric sequence is proper. For example, 
\[
  u_{m,k}=\frac1{mk+1}\binom{2m-2k-1}{m-1}
\]
is not. By a criterion of Abramov~\cite{Abramov2002a}, creative telescoping fails on this example, i.e., $T_{\bt}(I)=\{0\}$.
This phenomenon is well consistent with our theorem, because the nonlinear factor in the denominator implies 
that the annihilating ideal~$I$ has polynomial growth~2:
a possible choice for $P_s$ is
\[
 P_s(m,k)=\prod_{i+j\leq s}((m+i)(k+j)+1)\times\prod_{|i|\leq 3s}(m-2k+i),
\]
whose degree is quadratic in~$s$, hence the polynomial growth by Thm.~\ref{thm:p-for-diff-diff-alg} and Prop.~\ref{prop:1}.
Thm.~\ref{thm:ct} (with $d=0,p=2,|\bt|=1,n=2$) implies the trivial bound $\dim T_{\bt}(I)\leq1$, which
is reached in this example.

\parag{Holonomic Functions}
This notion was popularized for special functions in~\cite{Zeilberger1990}. 
The technical definition is related to holonomic D-modules. 
Thm.~\ref{thm:p-for-diff-diff-alg} implies that it is sufficient to consider $\partial$-finite ideals, as was  shown first by Takayama~\cite{Takayama1992}.
\begin{proposition}\label{propholo}
Let~$I\ideal\Ore_{\bx,\bt}=C(\bx,\bt)\<\bpartial_{\bx,\bt}>$ be a left ideal of dimension~0 of the 
\emph{differential} Ore algebra ($\sigma_i=\operatorname{Id}$, $\delta_i=d/dx_i$). Then $I$ has polynomial growth~1.
\end{proposition}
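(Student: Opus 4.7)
The plan is to apply Theorem~\ref{thm:p-for-diff-diff-alg} with $p=1$. First I would check that the purely differential Ore algebra qualifies as a difference-differential algebra: every $\sigma_i$ is the identity, and every $\delta_i = d/dx_i$ is a derivation that does not raise the $\bt$-degree of a polynomial, so the condition on the $\delta_i$ holds with $\nu = 0$. Accordingly $\mathcal{S} = \emptyset$ and $\mathcal{D}$ contains every index. Since $\dim I = 0$, the set $\Gamma$ of monomials in normal form is finite, so the remark following the theorem provides a uniform pair $(L,m)$ satisfying~\eqref{eq:uniform-reductions}.

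Next I would unfold the recurrence defining~$P_s$ in this setting. Since every $\sigma_i$ is the identity, $\sigma_i(P_s) = P_s$, so the lcm over~$i$ of the $L\sigma_i(P_s)$ is simply $L P_s$, and this already divides $P_s \lcm(L, Q_s)$ because $L$ divides $\lcm(L, Q_s)$. The $\mathcal{D}\neq\emptyset$ branch of the recurrence thus collapses to $P_{s+1} = P_s \lcm(L, Q_s)$, with $P_0 = 1$.

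From here the main step is to show by induction that $P_s = L^s$ for all $s \ge 0$, which gives $\deg_\bt P_s \le s \deg_\bt L = O(s)$ and hence polynomial growth~$1$ via Theorem~\ref{thm:p-for-diff-diff-alg}. The base $P_1 = L$ is immediate, and for the induction step one observes that the set of irreducible factors of $P_s = L^s$ is contained in that of~$L$, so the squarefree part $Q_s$ divides~$L$ and $\lcm(L, Q_s) = L$.

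I expect no significant obstacle; the whole point is that the derivational case is precisely the case in which the recurrence degenerates maximally. The one subtle point worth checking carefully is that the induction $P_s = L^s$ really closes, i.e.\ that although $Q_s$ is squarefree while $L$ need not be, the squarefree part of $L^s$ still divides $L$, which is exactly what makes $\lcm(L, Q_s)$ equal to $L$ and not something larger. A minor edge case worth flagging is $\deg_\bt L = 0$, in which case $\deg_\bt P_s$ is constant rather than linear; this is still within polynomial growth~$1$ in the sense of Definition~\ref{def:p}, since $O(1) \subseteq O(s)$.
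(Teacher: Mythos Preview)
Your proposal is correct and follows essentially the same route as the paper: apply Theorem~\ref{thm:p-for-diff-diff-alg} in the purely derivational setting and observe that the recurrence collapses to $P_s=L^s$, giving $\deg_\bt P_s=O(s)$ and hence polynomial growth~$1$. Your write-up is in fact more detailed than the paper's own three-line argument; the edge case $\deg_\bt L=0$ that you flag is not discussed there either, and your handling of it via Definition~\ref{def:p} is fine.
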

\begin{proof}
The crucial property is that in the differential setting, the leading coefficient of an element does not change upon multiplication by some~$\partial_i$ (see~Table~\ref{Ore-alg-table}). 
The result is a corollary of Thm.~\ref{thm:p-for-diff-diff-alg}, which is applicable since $I$~has dimension~0.
The polynomials~$P_s$ defined in the theorem here take the form~$P_s=L^s$, having degree $O(s^1)$ and thereby proving a polynomial growth~1.
\end{proof} 
Put together, Prop.~\ref{propholo} and Thm.~\ref{thm:ct} imply $\dim T_{\bt}(I)=0$, 
and thus recover the celebrated closure of holonomic functions under definite integration
as a special case.
\parag{Stirling-like sequences}    \label{Stirling-like}
This notion was introduced in~\cite{Kauers2007}. A~sequence is called Stirling-like if it is annihilated by an ideal of operators
\[
  \< u + v S_k^\alpha S_{m_1}^\beta + w S_k^\gamma S_{m_1}^\delta,
     s_2 + t_2 S_{m_2}, \dots,
     s_n + t_n S_{m_n} >,
\]
where $u,v,w,s_2,\dots,s_n,t_2,\dots,t_n$ are polynomials that split into integer-linear factors, and $\alpha,\beta,\gamma,\delta$
are integers subject to a certain nondegeneracy condition.
A typical example is $\binom mk S_2(k,m)$, where $S_2$ refers to the Stirling numbers of the second kind. 
It was shown in~\cite{Kauers2007} that such ideals lead to non-trivial creative telescoping relations whenever $n\geq2$. 
These ideals have dimension~1 and polynomial growth~1, so our theorem (with $d=1,p=1,|\bt|=1,n\geq3$) includes this result as a special case.

\parag{Abel-type sequences}
This notion was introduced in~\cite{Majewicz1996}. A sequence is called Abel-type if it can be written in the form 
\[
  u_{m,k}(k+r)^k(m-k+s)^{m-k}\frac{r}{k+r}
\]
for some proper hypergeometric term~$u_{m,k}$. The annihilating ideals of such sequences can be written 
\[
 \< a S_mS_k - b S_r, c S_m - d S_s >
\]
for certain polynomials $a,b,c,d$.
It was shown in~\cite{Majewicz1996} that such ideals lead to non-trivial creative telescoping relations.
These ideals have dimension~2, and if $u_{m,k}$ is as in~\eqref{eq:properhg}, then 
\[
 P_s(m,r,s,k)=\tilde P_s(m,k)\times\prod_{|j|\leq s}(k+r+j)(m-k+j),
\]
with $\tilde P_s(m,k)$ being the polynomial sequence stated above for proper hypergeometric terms,
justifies that their polynomial growth is~1, so our theorem (with $d=2$, $p=1$, $|\bt|=1$, $n=4$) includes this result as a special case, too. 

\parag{Bernoulli examples}
Chen and Sun~\cite{ChenSun2008} do not give a formal description of the class of summands to which their algorithm is applicable.
Most of their examples concern sums with summands of the form $h_{k,m_1,m_2} B_{a k+b m_1+c m_2}$ where $h$ is a hypergeometric term, $B$~refers to Bernoulli
numbers, Bernoulli polynomials, Euler numbers, or Euler polynomials, and $a,b,c$ are specific integers. 
The annihilating ideals of such objects have dimension~$1$, and, if $h$ is proper hypergeometric, polynomial growth~$1$.
In this case, our theorem guarantees the success of creative telescoping. 

\parag{Further examples}

Our theorem also extends to sequences and functions for which no special purpose summation
or integration algorithm has been formulated so far. 
For instance, the sequence $f_{n,m,k,l}$ considered in Examples~\ref{ex:doublestirling} and
\ref{ex:doublestirling:2} 
is not Stirling-like, but only annihilated by an ideal of dimension~2 and polynomial growth~1.
Our Theorem predicts the existence of creative telescoping relations
(cf.\ also Examples~\ref{ex:doublestirling:3} and~\ref{ex:doublestirling:4} below).
We list some further identities from the 
literature 
that were previously considered inaccessible to computer algebra, but that can be proven
with creative telescoping. 
In all the following identities, the integrand is not holonomic, but annihilated by an ideal of dimension~1
and polynomial growth~1, so our theorem predicts a priori that relations for the integral must exist.
{\allowdisplaybreaks
\begin{alignat*}1
 &\int_0^\infty x^{k-1}\zeta(n,\alpha+\beta x)\,dx=\beta^{-k}B(k,n-k)\zeta(n-k,\alpha),\\
 &\int_0^\infty x^{\alpha-1}\Li_n(-xy)\,dx=\frac{\pi(-\alpha)^ny^{-\alpha}}{\sin(\alpha\pi)},\\
 &\int_0^\infty x^{k-1}\exp(xy)\Gamma(n, xy)\,dx=\frac{\pi y^{-k}}{\sin((n+k)\pi)}\frac{\Gamma(k)}{\Gamma(1-n)},\\
 &\int_0^\pi\cos(nx-z\sin x)\,dx=\pi J_n(z). 
\end{alignat*}}%
In these identities, $B$~refers to the Beta-function, $\zeta$ to the Hurwitz zeta function, $\Li_n$ 
to the $n$th polylogarithm, $\Gamma$ to the incomplete Gamma function, and $J_n$ to the $n$th Bessel 
function. 
Greek letters refer to parameters, $n,m,k$ are discrete variables, $x,y,z$ are continuous ones.
Note that the polylogarithm, while holonomic for each specific~$n$, is not even $\partial$-finite when $n$ is
``symbolic''. Also, the integrand of the last integral, despite being elementary,
is not holonomic. 

All these identities are proven by computing operators annihilating both sides, by making use of closure properties for sums, products or~$\partial$ (Section~\ref{closure}) and by creative telescoping in the case of definite 
sums and integrals (Section~\ref{algos} below).
If the operators found in this way generate an ideal of dimension~0, then proving the identity reduces to verifying a finite number of
initial values. For higher dimensional ideals, the number of initial values to be checked may be infinite. For example, for completing
the proof of the first integral identity above, it remains to check the identity for $k=1$ and \emph{all\/} $n\geq0$:
\[
 \int_0^\infty \zeta(n,\alpha+\beta x)dx = \frac1{\beta(n-1)}\zeta(n-1,\alpha),
\]
an identity not much easier than the original one. In many instances, however, the identities to be verified as initial conditions are 
trivial. 

\section{Algorithms for Summation and Integration}\label{algos}

\subsection{A Fasenmyer-Style Algorithm}

The first algorithmic approach to symbolic summation goes back to Fasenmyer and 
was formulated for the summation of hypergeometric 
terms~\cite{PetkovsekWilfZeilberger1996}. 
The following generalized version of her algorithm is applicable to any given 
annihilating ideal in an Ore algebra. 

The algorithm follows from the proof of Thm.~\ref{thm:ct}. Given an ideal 
$I\ideal C(\bx,\bt)\<\bdx,\bdt>$, the algorithm searches for elements of 
$I\cap C(\bx)\<\bdx,\bdt>$ (called $\bt$-free operators). Proceeding by increasing
total degree, it makes an ansatz with undetermined coefficients for such 
an operator, reduces it to normal form with respect to a Gröbner basis for~$I$,
brings the normal form to a common denominator, and then compares the coefficients with respect to the $\bt$'s of the numerator to zero.
This gives a linear system of 
equations over $C(\bx)$ for the undetermined coefficients, which is then solved. 
If the system has no solution, the procedure is repeated with a larger total degree.
If a solution is found, it leads to an operator that can be written in the 
form~\eqref{division}
\[
  A+\partial_{t_1} B_1+\cdots + \partial_{t_{|t|}} B_{|t|}\in I
\]
where $0\neq A\in  T_{\bt}(I)$, by following the steps at the end of the proof that ensure~$A\neq0$.
The procedure can be repeated to search for further operators in $T_{\bt}(I)$
until all operators found generate an ideal of the dimension predicted by 
Thm.~\ref{thm:ct}.

\begin{example}\label{ex:doublestirling:3}
 For the ideal~$I$ from Example~\ref{ex:doublestirling}, with $k$ as summation variable,
 this algorithm discovers a first operator for total degree~4 that may be written in the form
\begin{multline*}
S_m + S_l + (2 + l + m)S_lS_m - S_lS_mS_n\\
{}+(S_k - 1)((m+1)S_mS_l-S_m S_n S_l+S_l).
\end{multline*}
As a consequence, we find 
 \[
   A:=S_m + S_l + (2 + l + m)S_lS_m - S_lS_mS_n\in T_k(I).
 \]
Therefore $A$ belongs to the annihilating ideal of the sum
 \[
  \sum_k \binom nkS_2(k,l)S_2(n-k,m).
 \]
 This proves that the sum is equal to $\binom{l+m}lS_2(n,l+m)$: $A$ also belongs
 to the annihilating ideal of that quantity, and it 
agrees 
 with
 the sum for $n=0$ and arbitrary $l$ and~$m$.  
\end{example}

\subsection{A Zeilberger-Style Algorithm}\label{sec:fast-algo}

 Zeilberger's ``fast algorithm'' was originally formulated for summation of proper hypergeometric terms 
 only~\cite{Zeilberger1990b}, then a differential analog was given for hyperexponential functions by Almkvist and Zeilberger~\cite{AlmkvistZeilberger1990}.
This was later extended by Chyzak~\cite{Chyzak2000} to an integration/summation algorithm for arbitrary $\partial$-finite functions, with termination guaranteed in the holonomic case.
 We extend this algorithm further to the case of integrands or summands defined by arbitrary 
 annihilating ideals. 

 The fast approach is applicable only for single sums or integrals, i.e., if $|\bt|=1$.
(Refer however to~\cite[Sec.~3.3]{Chyzak2000} for an iterated treatment of summations and integrations.)
 Let $I\ideal C(\bx,t)\<\bdx,\partial_t>$ be given by a Gr\"obner basis~$G$ with respect
 to a graded order. 
 The algorithm searches for operators
$   A + \partial_t B\in I$
 with $A\in C(\bx)\<\bdx>$ and $B\in C(\bx,t)\<\bdx,\partial_t>$. Proceeding by increasing total
 degree, it makes an ansatz for both $A$ and~$B$ and computes the normal form of
 $Q:=A+\partial_t B$ with respect to~$G$. Without loss of generality, only irreducible terms 
 need to be included in the ansatz for~$B$.
 In order for $Q$ to belong to~$I$, it is necessary and
 sufficient that all the coefficients in the normal form of $Q$ be zero. Comparing them to zero
 leads to a system of first order linear functional equations (a ``coupled system''), 
 which is then solved. While in the $\partial$-finite case this system is always square, 
 in the case of positive dimension it may be rectangular, owing to extraneous equations 
 potentially being introduced by irreducible terms in $\partial_t B$ of degree $d+1$. 
 However, it is always possible to separate the equations into a square coupled system
 (that can be solved in a first step as in Chyzak's algorithm) and a system with
 additional linear algebraic constraints (that can be accommodated in a second step by computing
 suitable linear combinations of the solutions obtained in the first step).
 Any solution $A+\partial_t B\in I$ found in this way gives rise to an element~$A\in T_t(I)$.
 The procedure may be repeated until the elements of $T_t(I)$ found in this way 
 generate an ideal in $C(\bx)\<\bdx>$ whose dimension matches the dimension predicted
 by Thm.~\ref{thm:ct}.

\begin{example}\label{ex:doublestirling:4}
Applying the algorithm to the ideal~$I$ of Example~\ref{ex:doublestirling} with 
respect to~$k$, we obtain a coupled system of size $5\times14$ when $A$ and $B$
are assumed to have total degree 2 and~1, respectively.
This system has no solution. 
For total degrees 3 and~2, respectively, the obtained system is of size $14\times28$ and has the nontrivial solution
 \begin{alignat*}1
   A &= S_m + S_l + (2 + l + m)S_lS_m - S_lS_mS_n,\\
   B &= \frac{k(k+1)}{k^2-1-n-kn}S_l+\frac{(m+1)k}{k-n-1}S_mS_l,
 \end{alignat*}
hence $A + (S_k-1)B \in I$, and $A\in T_k(I)$. 
\end{example}

\section{Final Comments}

We have seen that algorithms for special functions doing computations in 
Ore algebras are not restricted to special functions with $\partial$-finite 
annihilating ideals. Instead, both
closure properties algorithms and algorithms for definite integration and
summation can be formulated for arbitrary ideals. The 
treatment could be extended further by including, for instance, ideals
of Laurent Ore algebras. 

Our generalized algorithms rely on the notion of ideal dimension as well as
on the notion of polynomial growth we introduced in Definition~\ref{def:p}.
According to this definition, the polynomial growth depends on the monomial
order imposed on the underlying algebra. Future research will focus on 
reducing the notion of polynomial growth to an intrinsic property of the ideal
at hand, as well as to devising an algorithm for computing the polynomial
growth. 
\bibliographystyle{plain}

\end{document}